\newif\ifconf
\newif\ifconf
\newif\ifconf
 \newenvironment{IEEEkeywords}{\begin{keywords}}{\end{keywords}}
 \newcommand\IEEEQED\QED
\newtheorem{theorem}{Theorem}
\newtheorem{lemma}[theorem]{Lemma}
\newtheorem{remark}{Remark}[section]
\newtheorem{ass}{Assumption}
\newtheorem{definition}{Definition}
\newcounter{cAss}
\newcounter{cAssSaved}
\newcommand\Ass[1]{\ensuremath{\boldsymbol{\mathcal A}_{\text{\hspace{0.75pt}\bf#1}}}}
\newlength\asswidth
 \newcounter{cDef}
\newcounter{cDefSaved}
\newcommand\Def[1]{\ensuremath{\boldsymbol{\mathcal P}_{\text{\hspace{0.75pt}\bf#1}}}}
\newlength\defwidth
\xdef\@endgadget#1{{\unskip\nobreak\hfil\penalty50\hskip1em\hbox{}\nobreak\hfil#1\parfillskip=0pt\finalhyphendemerits=0\par}}
\newcommand\@Endofsymbol{$\triangledown$}
\newcommand\Endofremark{\@endgadget{\@Endofsymbol}}
\newcommand{\R}{\mathbb{R}}
\newcommand\E{\mathrm e}
\DeclareMathOperator\sign{sign}
\title{\LARGE \bf
Learning-Based Robust Fixed-Time Terminal Sliding Mode Control}
\author{Chaimae El Mortajine$^a$, Moussa Labbadi$^b$, Adnane Saoud$^c$ and Mostafa Bouzi$^a$ 
\thanks{$^a$C. El Mortajine and M. Bouzi are with the MIET Laboratory, Faculty of Science and Technology, Hassan First University of Settat, Settat 26000, Morocco (e-mail:
\textsl{c.elmortajine.doc@uhp.ac.ma,mostafa.bouzi@uhp.ac.ma}).}
\thanks{$^b$M. Labbadi is  with the Aix-Marseille University, LIS UMR CNRS 7020, Marseille, 
France (e-mail: \textsl{moussa.labbadi@lis-lab.fr}).}
\thanks{$^c$A. Saoud is with the College of Computing, University Mohammed VI
Polytechnic, Benguerir, Morocco (e-mail: \textsl{adnane.saoud@um6p.ma}).}}
\begin{document}

\maketitle
\thispagestyle{empty}

\begin{abstract}
 In this paper, we develop and analyze an integral fixed-time sliding mode control method for a scenario in which the system model is only partially known, utilizing Gaussian processes. We present two theorems on fixed-time convergence. The first theorem addresses the fully known system model, while the second considers situations where the system’s drift is approximated utilizing Gaussian processes (GP) for approximating unknown dynamics. Both theorems establish the global fixed-time stability of the closed-loop system. The stability analysis is based on a straightforward quadratic Lyapunov function. Our proposed method outperforms an established adaptive fixed-time sliding mode control approach, especially when ample training data is available.
\end{abstract}

\begin{IEEEkeywords}
Fixed-time stability, Data-driven, Gaussian processes,
nonlinear control systems.
\end{IEEEkeywords}

\section{Introduction}\label{Sec:I}

Sliding Mode Control (SMC) is renowned for its robustness in handling matched perturbations and ensuring finite-time stability \cite{utkin1977variable}. The SMC design involves defining a sliding surface and selecting a control law that drives the system trajectory to this surface, stabilizing it via a discontinuous control input. To resolve the reaching mode problem, Integral Sliding Mode Control (ISMC) was proposed, which eliminates the reaching phase and reduces control effort \cite{utkin1996integral}.

ISMC has gained significant attention in recent years, with research focusing on its theoretical and practical applications. While these studies typically guarantee asymptotic or exponential stability, which meets control objectives as time tends to infinity, this type of stability is often insufficient for time-critical systems \cite{bhat2000finite, moulay2006finite, efimov2021finite}. In systems like robotic networks and autonomous vehicles, finite-time stability is vital for synchronized, timely responses. Conventional stability concepts fail to address the dynamic interactions and time-sensitive responses of interconnected subsystems, highlighting the necessity for control strategies that guarantee finite-time stability. A critical challenge is that the settling time estimation is highly sensitive to initial conditions, which can be unpredictable. Polyakov \cite{polyakov2011nonlinear} addressed this by introducing fixed-time (FxT) stability, which ensures a settling time independent of initial conditions.

Recently, various approaches for FxT control have emerged. In  \cite{lopez2020finite} explored FxT control for double integrator systems, while in \cite{golestani2021fixed} examined high-order integral systems. FxT stabilization for autonomous systems was also studied by L'opez et al. \cite{lopez2019conditions}. Applications of these techniques appear in the works of Chen et al. \cite{chen2022adaptive}, Liu et al. \cite{liu2022overview}, Ni et al. \cite{ni2016fast}, and Zou et al. \cite{zou2020fixed, zou2019fixed}. The integration of ISMC with FxT stability has recently been developed \cite{khanzadeh2023fixed}, offering the advantage of eliminating the reaching mode and ensuring a fixed-time property. However, these approaches often involve complex control designs with numerous parameters, making tuning challenging. Furthermore, robustness against perturbations remains an issue, signaling the need for further improvements in robustness and simplicity for practical applications.

Gaussian processes (GPs) have become widely used in control theory \cite{hashimoto2022learning, capone2019backstepping} due to their flexible, nonparametric, data-driven framework that balances data fitting and regularization in noisy environments \cite{umlauft2017feedback, umlauft2018uncertainty, chowdhary2014bayesian, seeger2004gaussian, doyle2002identification, ljung1999system}. GPs also quantify model uncertainty, enabling the derivation of model error bounds, unlike traditional methods like Volterra series or artificial neural networks, which require careful model structure selection and do not inherently account for uncertainty \cite{ljung1999system}.

In this paper, we propose an integral fixed-time sliding mode control method for nonlinear systems \cite{labbadi2023design}. The method introduces a non-singular terminal sliding variable with a guaranteed fixed-time property. A feedback control law is then applied to achieve fixed-time stabilization of the system under matched perturbations, with easily tunable parameters. For systems with partially unknown dynamics, GPs are employed to estimate the drift term and control effectiveness, aiding in the design of the equivalent control. Stability and settling time estimates for the closed-loop system are provided, alongside numerical simulations to demonstrate the effectiveness of the approach.

\section{Preliminaries}\label{Sec:II}
\subsection{Fixed- and finite-time stability}
{\textcolor{black}{To review some findings regarding fixed- and finite-time stability, Consider the following system:}
\begin{equation}
\label{eq:fz}
\begin{aligned}
\dot{x}(t) = f(x(t)), \quad x(t) \in \mathbb{R}^n, \quad x(0) = x_0.
\end{aligned}
\end{equation}
with the continuous function $f$.
\begin{definition}
From \cite{bhat2000finite}: {\textcolor{black}{a}} system denoted as \eqref{eq:fz} is considered globally finite-time stable if it satisfies Lyapunov stability and for any initial condition vector $x_0$ in $\mathbb{R}^n$, there exists a settling time function $T(x_0) \geq 0$ dependent on the initial conditions. Specifically, for any solution $x(\cdot)$ of system \eqref{eq:fz} with the initial condition $x(0) = x_0$, it holds that as $t$ approaches $T(x_0)$, the norm of $x(t)$ converges to zero, i.e., $\lim_{t \to T(x_0)} |x(t)| = 0$. In other words, the norm of $x(t)$ remains identically zero for all $t \geq T(x_0)$. The function $T(x_0)$ is referred to as the settling time.
\end{definition}
\begin{definition} {\textcolor{black}{(see \cite{polyakov2011nonlinear}): System \eqref{eq:fz} is considered globally FxT stable if it satisfies the following conditions: 
1) It is globally finite-time stable; 2) The settling-time function \(T(x_0)\) is upper-bounded by a constant \(T_{\max} > 0\), meaning that for all \(x_0 \in \mathbb{R}^n\), \(T(x_0) \leq T_{\max}\), and \(T_{\max}\) remains constant regardless of the initial conditions.
}}
\end{definition}
The following lemmas are necessary for the subsequent analysis.
\begin{lemma}\label{lemma:1} (see \cite{polyakov2011nonlinear}): If there exists a continuously differentiable positive definite radially unbounded function $V: \mathbb{R}^n \to \mathbb{R}^+$ such that
\begin{small}
\[
\dot{V}(x) \leq -c_1V(x)^{h_{\textcolor{black}1}} - c_2V(x)^{h_2} 
\]\end{small}
where $V$ is called the Lyapuov function and $x \in \mathbb{R}^n$, $c_i> 0$, and $0 < h_1 < 1 < h_2$, then system \eqref{eq:fz} is globally FxT stable  {\textcolor{black}{with a settling time bounded by}}
\begin{small}
\begin{align}
T(x_0) \leq \frac{1} {c_1(1 - h_1)} + \frac{1} {c_2(h_2 - 1)}.
\end{align}\end{small}
\end{lemma}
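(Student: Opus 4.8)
The plan is to reduce the vector differential inequality to a scalar comparison problem, and then estimate the settling time of the comparison system by a single improper integral whose value is finite and, crucially, independent of the initial condition. Since the bound $\dot V(x) \le -c_1 V(x)^{h_1} - c_2 V(x)^{h_2}$ with $c_1,c_2>0$ makes the right-hand side strictly negative whenever $x\neq 0$, my first observation is that $V(x(t))$ is non-increasing along every solution; together with the positive definiteness and radial unboundedness of $V$ this already secures Lyapunov stability. Hence only the finite-time reaching of the origin and the explicit settling-time bound remain to be established.

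Next I would introduce the scalar comparison system $\dot y = -c_1 y^{h_1} - c_2 y^{h_2}$ with initial value $y(0)=V(x_0)\ge 0$, and invoke the comparison lemma to conclude $0\le V(x(t)) \le y(t)$ as long as the solution $x(\cdot)$ exists. It therefore suffices to bound the time at which $y$ reaches zero. Separating variables gives the settling time of the comparison system as $T=\int_0^{V(x_0)} \frac{ds}{c_1 s^{h_1}+c_2 s^{h_2}}$, and since the integrand is positive I may enlarge the domain of integration to remove the dependence on the (unpredictable) initial condition:
\[
T \;\le\; \int_0^{\infty} \frac{ds}{c_1 s^{h_1}+c_2 s^{h_2}}.
\]
I would then split this integral at $s=1$. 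On $(0,1)$ the denominator satisfies $c_1 s^{h_1}+c_2 s^{h_2}\ge c_1 s^{h_1}$, so that part is at most $\int_0^1 s^{-h_1}ds/c_1 = 1/\big(c_1(1-h_1)\big)$, which is finite precisely because $h_1<1$; on $(1,\infty)$ we have $c_1 s^{h_1}+c_2 s^{h_2}\ge c_2 s^{h_2}$, so that part is at most $\int_1^{\infty} s^{-h_2}ds/c_2 = 1/\big(c_2(h_2-1)\big)$, finite because $h_2>1$. Summing the two pieces yields exactly the claimed bound $\tfrac{1}{c_1(1-h_1)}+\tfrac{1}{c_2(h_2-1)}$, uniform in $x_0$.

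The step I expect to be the main obstacle is the rigorous justification of the comparison argument rather than the integral estimate, which is routine. Because $f$ is only assumed continuous, solutions of \eqref{eq:fz} need not be unique, so I would have to argue the domination $V(x(t))\le y(t)$ for every solution of the differential inequality, not just a nominal one, and handle the case $V(x_0)=0$ trivially. The second delicate point is the convergence of the integral near $s=0$: it is exactly the condition $h_1<1$ that makes $\int_0 s^{-h_1}ds$ finite, and this is what upgrades asymptotic decay to genuine finite-time convergence of $V$, so I would emphasize that the origin is reached at some $T(x_0)\le T_{\max}$ and that $V$, hence $|x|$, stays at zero thereafter, completing the verification of global fixed-time stability in the sense of the preceding definitions.
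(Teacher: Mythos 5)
Your proposal is correct, but note that the paper itself offers no proof of this lemma at all: it is imported verbatim from the cited reference \cite{polyakov2011nonlinear}, so there is no in-paper argument to compare against. Your argument is essentially the standard proof of Polyakov's result. The one step that genuinely needs the care you flag is the comparison argument: since $h_1<1$, the right-hand side $-c_1y^{h_1}-c_2y^{h_2}$ of your comparison ODE is not Lipschitz at $y=0$, so the usual statement of the comparison lemma does not apply off the shelf. This is repairable in either of two standard ways: (i) observe that the right-hand side is strictly decreasing in $y\ge 0$, which gives forward-in-time uniqueness of the comparison solution and restores the comparison principle, or (ii) bypass the comparison system entirely and integrate the differential inequality directly, writing, for any solution $x(\cdot)$ and any $t$ before the reaching time,
\begin{equation*}
t \;\le\; \int_{V(x(t))}^{V(x_0)} \frac{ds}{c_1 s^{h_1}+c_2 s^{h_2}} \;\le\; \int_{0}^{\infty} \frac{ds}{c_1 s^{h_1}+c_2 s^{h_2}},
\end{equation*}
which is legitimate because $V(x(\cdot))$ is absolutely continuous and strictly decreasing while positive; this version also handles non-uniqueness of solutions of \eqref{eq:fz} for free, since the inequality holds along every solution. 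Your splitting of the improper integral at $s=1$ and the resulting bound $\tfrac{1}{c_1(1-h_1)}+\tfrac{1}{c_2(h_2-1)}$ are exactly right (this is also equivalent to the common two-phase argument: use the $V^{h_2}$ term to reach the set $\{V\le 1\}$ in time at most $\tfrac{1}{c_2(h_2-1)}$, then the $V^{h_1}$ term to reach the origin in time at most $\tfrac{1}{c_1(1-h_1)}$), and your closing observation that monotonicity of $V$ keeps the state at the origin after the settling time completes the fixed-time stability claim.
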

\begin{lemma}\label{lemma:2}\cite{labbadi2023design}
Consider the following system:
\begin{equation}
\dot{x}(t)=-\alpha\frac{\sqrt{\pi}}{2}\exp(x^2)\sign(x)+d(t),\quad x(0)=x_0,\label{eq:sysP}
\end{equation}
where \(\alpha>\tfrac{2}{\sqrt{\pi}}\bar{d}\),   $x\in\R$, and  $d\in\R$.
Then,  system \eqref{eq:sysP} is globally fixed-time stable with a settling time $T(x_0)\leq T_{\max}$ where
\(T_{\max}=\frac{1}{\alpha-\tfrac{2}{\sqrt{\pi}}\bar{d}}.
\)
\end{lemma}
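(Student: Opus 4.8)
The plan is to use the elementary Lyapunov candidate $V=\abs{x}$ and to exploit the single structural feature that makes the claim work: the gain $\exp(x^2)$ is bounded below by $1$, so the bounded disturbance is dominated \emph{uniformly} in $x$. First I would compute, for $x\neq 0$ and under the standing assumption $\abs{d(t)}\le\bar d$,
$\dot V=\sign(x)\dot x=-\alpha\tfrac{\sqrt\pi}{2}\exp(V^2)+\sign(x)\,d(t)$, where I used $\abs{x}=V$ and hence $x^2=V^2$. Bounding $\sign(x)\,d(t)\le\bar d$ gives $\dot V\le-\alpha\tfrac{\sqrt\pi}{2}\exp(V^2)+\bar d$.

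The crucial step is to rewrite the right-hand side as $-\tfrac{\sqrt\pi}{2}\exp(V^2)\bigl(\alpha-\tfrac{2\bar d}{\sqrt\pi}\exp(-V^2)\bigr)$ and to use $\exp(-V^2)\le 1$. Since $\alpha>\tfrac{2}{\sqrt\pi}\bar d$ by hypothesis, the parenthesized factor is bounded below by $\beta:=\alpha-\tfrac{2}{\sqrt\pi}\bar d>0$, yielding the clean differential inequality $\dot V\le-\tfrac{\sqrt\pi}{2}\beta\,\exp(V^2)$. This already shows $V$ is strictly decreasing while positive, establishing Lyapunov stability and monotone approach to the origin.

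Next I would separate variables and integrate via a comparison argument: from $e^{-V^2}\,dV\le-\tfrac{\sqrt\pi}{2}\beta\,dt$, integrating from $t=0$ (with $V=\abs{x_0}$) to the first time $T$ at which $V=0$ gives $\int_{\abs{x_0}}^{0}e^{-V^2}\,dV\le-\tfrac{\sqrt\pi}{2}\beta\,T$. Recognizing $\int_0^{a}e^{-V^2}\,dV=\tfrac{\sqrt\pi}{2}\erf(a)$, this rearranges to $T\le\erf(\abs{x_0})/\beta$, and the fixed-time conclusion then follows from the one observation that $\erf(\cdot)<1$ for every finite argument, so $T\le 1/\beta=T_{\max}$ \emph{uniformly} in $x_0$.

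The main obstacle is conceptual rather than computational, and it is precisely the point of the construction: one must recognize that the $\exp(x^2)$ gain is engineered so that its reciprocal $e^{-x^2}$ has a finite, initial-condition-independent integral (the error function saturates at $1$), which is exactly what upgrades finite-time to fixed-time convergence. A secondary technicality is justifying the integration of the differential inequality across the discontinuity of $\sign$ at $x=0$; I would handle this through the comparison principle applied to the continuous majorizing dynamics $\dot y=-\tfrac{\sqrt\pi}{2}\beta\exp(y^2)$ rather than manipulating the discontinuous right-hand side directly.
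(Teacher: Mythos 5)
Your proof is correct. Note, however, that the paper itself contains no proof of this lemma: it is imported verbatim from the cited reference \cite{labbadi2023design}, so there is no in-paper argument to compare against; what you have reconstructed is essentially the canonical argument behind this class of results. Concretely, your chain $\dot V=\sign(x)\dot x\le-\alpha\tfrac{\sqrt\pi}{2}\E^{V^2}+\bar d=-\tfrac{\sqrt\pi}{2}\E^{V^2}\bigl(\alpha-\tfrac{2\bar d}{\sqrt\pi}\E^{-V^2}\bigr)\le-\tfrac{\sqrt\pi}{2}\beta\,\E^{V^2}$ with $\beta=\alpha-\tfrac{2}{\sqrt\pi}\bar d>0$ is exactly the right domination step, and the separation of variables $\int_0^{\abs{x_0}}\E^{-v^2}\,\D v=\tfrac{\sqrt\pi}{2}\erf(\abs{x_0})$ gives $T\le\erf(\abs{x_0})/\beta<1/\beta=T_{\max}$, which matches the stated bound; an equivalent and slightly slicker packaging of the same computation is to take $W(x)=\erf(\abs{x})$ itself as the Lyapunov function, for which the inequality reads $\dot W\le-\beta$, making the fixed-time bound immediate. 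Two small points deserve explicit mention if this were written out in full: (i) to integrate up to the hitting time you first need to know a finite hitting time exists, which follows from the crude bound $\dot V\le-\tfrac{\sqrt\pi}{2}\beta$ (since $\E^{V^2}\ge1$); (ii) settling requires not just reaching the origin but staying there, which holds in the Filippov sense because $\alpha\tfrac{\sqrt\pi}{2}>\bar d$ ensures the set-valued right-hand side at $x=0$ contains zero and the differential inequality $\dot V<0$ for $x\neq0$ forbids any excursion — your appeal to the comparison principle handles (i) but you should state (ii) rather than fold it into the discontinuity remark.
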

\subsection{Problem statement}
Consider the following perturbed system \cite{khanzadeh2023fixed}.
\begin{equation}
\begin{cases}
\dot{x}_{\textcolor{black}{1}}(t) = f_1(x) + g_1(x)u_1(t) + d_1(t) \\
\dot{x}_{\textcolor{black}{2}} (t) = f_2(x) + g_2(x)u_2(t) + d_2(t) \\
\vdots \\
\dot{x}_n(t) = f_n(x) + g_n(x)u_n(t) + d_n(t)
\end{cases}
\label{eq:model}
\end{equation}
where: 
{\textcolor{black}{$x = [x_1 \ x_2 \ \cdots \ x_n]^T \in \mathbb{R}^n$ is the state,
$u = [u_1 \ u_2 \ \cdots \ u_n]^T \in \mathbb{R}^n$ is the input,
$f_i$ and $g_i$ are known nonlinear functions such that $g_i\neq 0$ for all $x \in \mathbb{R}^n$.
$d_i \in \mathbb{R}^n$ is a perturbation assumed to be bounded but with an unknown upper bound, satisfying $|d_i(t)| \leq \bar{d}_i$ with $(i = \{1, \ldots, n\})$.}} The mathematical model represented by equation \eqref{eq:model} holds significance across various crucial engineering applications, including but not limited to multi-agent systems \cite{defoort2015leader}, synchronization of complex dynamical networks \cite{yang2017fixed}, recurrent neural networks \cite{yang2017fixed}, and beyond. 
This form is commonly found in many practical systems such as robot manipulators, autonomous vehicles, and quadrotors. These systems often exhibit complex dynamics and uncertainties that must be managed effectively to ensure safe and reliable operation. To ensure the applicability of our approach, we make the following assumption regarding the matrix \( g \).

\section{Fixed-time tracking with know dynamics}\label{sec:FxT}
In this section, we present the primary findings of the letter. Inspired by the methodology detailed in \cite{labbadi2023design}, we introduce a nonsingular FxT sliding variable. Subsequently, we utilize a robust FxT reaching controller to effectively stabilize the sliding variable within an FxT framework.
Firstly, let's define the tracking error as follows:
\begin{equation}\label{eq:e}
	\color{black}{{z_i(t)} = {x}_i(t) - {x_{id}(t)}}
	\end{equation}
where $x_i(t)$ is the output of the system \eqref{eq:model}, and $\color{black}{x_{id}}$ denotes its reference value. Inspired by \cite{moulay2021robust}, we introduce the following sliding variable:
	\begin{equation}\label{eq:s}
	\begin{aligned}
	s_i(z_i) = {z_i (t)} + \alpha_{i1}\int_{0}^t \exp(z_{i}(\tau)^2)\left\lfloor z_i(\tau)\right\rceil^{\frac{p_i}{q_i}} d\tau,
	\end{aligned}
	\end{equation}
with \(\alpha_{i1}>0\)   and \(\frac{p_i}{q_i}\in[0 \ 1)\). We propose the following control law:
\begin{small}
\begin{equation}\label{eq:u}
    \begin{aligned}
        u_i(z) &= -g_i(x)^{-1} \left(f_i(x)+ \alpha_{i1}\exp(z_{i}(t)^2)\left\lfloor z_i(t)\right\rceil^{\frac{p_i}{q_i}} \right.\\ 
    & \quad \left.- {\dot{x}_{id}(t)}+\frac{\sqrt{\pi}}{2}\alpha_{i2}\exp(s_i(z_i)^2)\left\lfloor s_i(z_i)\right\rceil^{0}\right)
    \end{aligned}
\end{equation}
\end{small}
 with $\alpha_{i2} >  \frac{2}{\sqrt{\pi}}\bar{d}_i>0$.
We present the following theorem.
\begin{theorem}\label{theorem:1}
   If \( \alpha_{i2} >  \bar{d}_i \), the global closed-loop system \eqref{eq:model}--\eqref{eq:u} is FxT stable with a  bounded settling time satisfying  
   \begin{equation}\label{eq:T}
     T_{\max}= T(s) + T(z),
   \end{equation}
where 
\begin{equation}\label{eq:T1}T(z)\leq \max\bigg(T_1(z_1), T_2(z_2),\dots, T_n(z_n)\bigg).\end{equation}
with 
\begin{equation}\label{eq:T2}
T_i(z_i) \leq \frac{2}{\alpha_{i1}}.\frac{1}{1-\frac{p_i^2}{q_i^2}}\end{equation}
and 
\begin{small}
\begin{equation}\label{eq:T3} T(s) \leq \max\bigg(T_1(s_1), T_2(s_2),\dots, T_n(s_n)\bigg),\end{equation}\end{small}
with
\begin{small}
\begin{equation}\label{eq:T4}
T_i(s_i) \leq \frac{1}{\alpha_{i2}-\tfrac{2}{\sqrt{\pi}}\bar{d}_i}.
\end{equation}
\end{small}

\end{theorem}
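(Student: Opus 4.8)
The plan is to follow the classical two–phase sliding–mode argument, treating each channel $i$ separately since the feedback law \eqref{eq:u} cancels the full-state coupling through $f_i(x)$ and leaves scalar, decoupled closed-loop dynamics in each coordinate. First I would establish a \emph{reaching phase} in which every sliding variable $s_i$ is driven to the origin in fixed time $T_i(s_i)$, and then a \emph{sliding phase} in which the reduced dynamics on the manifold $s_i=0$ steer the tracking error $z_i$ to zero in fixed time $T_i(z_i)$. The component times are aggregated as $T(s)=\max_i T_i(s_i)$ and $T(z)=\max_i T_i(z_i)$, and the global estimate \eqref{eq:T} follows by concatenating the two phases.

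For the reaching phase I would differentiate \eqref{eq:s}, giving $\dot s_i=\dot z_i+\alpha_{i1}\exp(z_i^2)\lfloor z_i\rceil^{p_i/q_i}$, substitute $\dot z_i=f_i(x)+g_i(x)u_i(t)+d_i(t)-\dot x_{id}(t)$ obtained from \eqref{eq:model} and \eqref{eq:e}, and insert the control \eqref{eq:u} (noting $\lfloor s_i\rceil^{0}=\sign(s_i)$). The design is engineered so that $g_i(x)u_i$ exactly removes $f_i(x)$, $-\dot x_{id}(t)$ and the integral-term derivative, leaving the closed-loop relation $\dot s_i=-\tfrac{\sqrt{\pi}}{2}\alpha_{i2}\exp(s_i^2)\sign(s_i)+d_i$. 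This is precisely the scalar system of Lemma~\ref{lemma:2} with gain $\alpha_{i2}$ and disturbance bound $\bar d_i$; since $\alpha_{i2}>\tfrac{2}{\sqrt{\pi}}\bar d_i$, Lemma~\ref{lemma:2} yields $T_i(s_i)\le(\alpha_{i2}-\tfrac{2}{\sqrt{\pi}}\bar d_i)^{-1}$, which is \eqref{eq:T4}, and \eqref{eq:T3} follows by maximizing over $i$.

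For the sliding phase, on $s_i=0$ the identity $\dot s_i=0$ forces the reduced dynamics $\dot z_i=-\alpha_{i1}\exp(z_i^2)\lfloor z_i\rceil^{p_i/q_i}$. I would take $V_i=\abs{z_i}$ (the sign symmetry of the right-hand side makes this natural), yielding $\dot V_i=-\alpha_{i1}\exp(V_i^2)V_i^{p_i/q_i}$. Using $\exp(V_i^2)\ge 1+V_i^2$ splits this as $\dot V_i\le-\alpha_{i1}V_i^{p_i/q_i}-\alpha_{i1}V_i^{\,2+p_i/q_i}$, which fits Lemma~\ref{lemma:1} with $h_1=p_i/q_i<1$, $h_2=2+p_i/q_i>1$ and $c_1=c_2=\alpha_{i1}$. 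Lemma~\ref{lemma:1} then gives $T_i(z_i)\le\frac{1}{\alpha_{i1}(1-p_i/q_i)}+\frac{1}{\alpha_{i1}(1+p_i/q_i)}=\frac{2}{\alpha_{i1}(1-p_i^2/q_i^2)}$, which is \eqref{eq:T2}, and \eqref{eq:T1} follows again by maximizing over $i$.

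The main obstacle is not any single calculation but the rigorous stitching of the two phases and the well-posedness of the sliding motion. I expect the delicate points to be: verifying that the manifold $s_i=0$ is rendered invariant once reached (at $s_i=0$ one has $\exp(s_i^2)=1$, so $\tfrac{\sqrt{\pi}}{2}\alpha_{i2}>\bar d_i\ge\abs{d_i}$ guarantees $s_i\dot s_i<0$ off the manifold and sustains the sliding motion); handling the nonsmoothness of $V_i=\abs{z_i}$ at the origin, which I would circumvent by working on $z_i\ne0$ and invoking a comparison argument for the terminal convergence; and assembling the channel-wise times, where each coordinate converges by time $T_i(s_i)+T_i(z_i)\le T(s)+T(z)$, so that the global bound $T_{\max}=T(s)+T(z)$ of \eqref{eq:T} holds uniformly in the initial condition.
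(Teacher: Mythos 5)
Your proposal is correct and shares the paper's overall architecture: the same two-phase decomposition, the identical reaching-phase argument (differentiate \eqref{eq:s}, substitute \eqref{eq:u}, obtain $\dot s_i=-\tfrac{\sqrt{\pi}}{2}\alpha_{i2}\exp(s_i^2)\sign(s_i)+d_i$, and invoke Lemma~\ref{lemma:2} to get \eqref{eq:T4}--\eqref{eq:T3}), and the same reduced dynamics on the manifold. Where you genuinely diverge is the sliding phase. The paper takes $V_i=z_i^2$ and runs a two-case analysis: for $|z_i|\geq 1$ it uses $\exp(z_i^2)>z_i^2$ to get $\dot V_i\leq -2\alpha_{i1}V_i^{(p_i/q_i+3)/2}$, and for $|z_i|<1$ it uses $\exp(z_i^2)\geq 1$ to get $\dot V_i\leq -2\alpha_{i1}V_i^{(p_i/q_i+1)/2}$, applying Lemma~\ref{lemma:1} separately in each regime and summing the two times. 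You instead take $V_i=|z_i|$ and use $\exp(V_i^2)\geq 1+V_i^2$ to produce \emph{both} powers in a single differential inequality, so Lemma~\ref{lemma:1} is applied once with $h_1=p_i/q_i$, $h_2=2+p_i/q_i$; both routes land on exactly \eqref{eq:T2}. Your splitting is cleaner and closer to the letter of Lemma~\ref{lemma:1} (which, as stated, requires both terms to be present simultaneously rather than regime-by-regime), but it has two small blemishes: $V_i=|z_i|$ is not continuously differentiable at the origin, which Lemma~\ref{lemma:1} formally requires (you flag this and sketch a workaround), and $h_1=p_i/q_i$ degenerates to $0$ when $p_i/q_i=0$, which the lemma's hypothesis $0<h_1$ excludes. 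Both issues vanish if you apply your own trick to the smooth choice $V_i=z_i^2$: then $\dot V_i\leq -2\alpha_{i1}\bigl(V_i^{(p_i/q_i+1)/2}+V_i^{(p_i/q_i+3)/2}\bigr)$, Lemma~\ref{lemma:1} applies verbatim, and the same bound \eqref{eq:T2} results. Finally, your explicit attention to invariance of the manifold and to concatenating the two phases uniformly over initial conditions addresses points the paper's proof passes over in silence.
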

\begin{proof}
\textcolor{black}{Consider the sliding manifold 
\[
\mathcal{S}=\left\{z_i(t)\in\mathbb{R}:s_i(z_i) = 0\right\}.
\]}
   The derivative of $s_i$ is,
\begin{equation}
    \begin{aligned}
        \dot{s}_i(z_i) &=  \alpha_{i1}\exp(z_{i}(t)^2)\left\lfloor z_i(t)\right\rceil^{\frac{p_i}{q_i}}- {\dot{x}_{id}(t)} +g_i(x)u_i(z)\\ 
    & +\alpha_{i1}\exp(z_{i}(t)^2)\left\lfloor z_i(t)\right\rceil^{\frac{p_i}{q_i}}+f_i(x)+d_i(t),
    \end{aligned}
\end{equation}

    Substituting the input, yields
\[\dot{s}_i(z_i)= -\frac{\sqrt{\pi}}{2}\alpha_{i2}\exp(s_i(z_i)^2)\left\lfloor s_i(z_i)\right\rceil^{0}+d_i(t),\]
Applying Lemma \ref{lemma:2}, the sliding variable \eqref{eq:s} is FxT stable with a bounded settling time $T_i(s_i)$ as given in \eqref{eq:T4} and for  system \eqref{eq:model}, the global settling time is bounded by $T(s)$ as presented in \eqref{eq:T3}.

Let us now prove that when \( z_i \in \mathcal{S} \), from \eqref{eq:s}, we have
\[
\dot{z}_i(t) =  -\alpha_{i1}\exp(z_{i}(t)^2)\left\lfloor z_i(t)\right\rceil^{\frac{p_i}{q_i}},\]
Let us define the Lyapunov function as: \( V_i(z_i) = z_i^2(t) \), one has
\begin{equation}\label{eq:LF_z}
    \begin{aligned}
\dot{V}_i = 2  z_i\dot{z}_i&=-2z_i\alpha_{i1}\exp(z_{i}(t)^2)\left\lfloor z_i(t)\right\rceil^{\frac{p_i}{q_i}}\\
&=-2\alpha_{i1}\exp(z_{i}(t)^2)|z_i(t)|^{\frac{p_i}{q_i}+1},  
    \end{aligned}
\end{equation}
The derivative of Lyapunov function admits two dynamics with respect to $V_i$ and $|z_i|$.

\emph{Case 1}: When $|z_i|\geq 1$ and  $V_i(z_i)\geq 1.$ 
We have 
\[\exp(z_{i}(t)^2)|z_i(t)|^{\frac{p_i}{q_i}+1} > z_{i}(t)^2|z_i(t)|^{\frac{p_i}{q_i}+1} = |z_i(t)|^{\frac{p_i}{q_i}+3}\]
From \eqref{eq:LF_z}, one has 
\begin{equation}\label{eq:LF_z1}
    \begin{aligned}
\dot{V}_i &\leq -2\alpha_{i1}|z_i(t)|^{\frac{p_i}{q_i}+3}= -2\alpha_{i1}V_i(z_i)^{\frac{\frac{p_i}{q_i}+3}{2}}
\end{aligned}
\end{equation} 
Applying Lemma \ref{lemma:1}, with $h_2 = \frac{\frac{p_i}{q_i}+3}{2}$ and $c_2= 2\alpha_{i1}>0$, we conclude that the tracking error is FxT stable, with a settling time. \[T_i^{'} \leq \frac{1}{\alpha_{i1}(\frac{p_i}{q_i}+1)},\]

\emph{Case 2.} When $V_i(z_i)<1$ and $|z_i|<1$, one has 
\begin{equation}\label{eq:LF_z2}
    \begin{aligned}
\dot{V}_i  &= -2\alpha_{i1}\exp(z_{i}(t)^2)|z_i(t)|^{\frac{p_i}{q_i}+1}\\
&\leq -2\alpha_{i1}|z_i(t)|^{\frac{p_i}{q_i}+1}= -2\alpha_{i1}V_i(z_i)^{\frac{\frac{p_i}{q_i}+1}{2}}
\end{aligned}
\end{equation} 
Due to  \(\min\left(\exp(z_{i}(t)^2)\right)=1 .\)
Applying Lemma \ref{lemma:1} such that $h_1 = \frac{1+\frac{p_i}{q_i}}{2}<1$ due to $\frac{p_i}{q_i}<1$ and $c_1= 2\alpha_{i1}>0$, we conclude that the tracking error is FxT stable, with a settling time.
 
\begin{align}
T_i^{''} \leq \frac{1}{\alpha_{i1}(1-\frac{p_i}{q_i})},
\end{align}
Then, the tracking error is FxT stable with a bounded settling time $T_i(z_i) \leq T_i^{'}+ T_i^{''}$ as given in \eqref{eq:T2}. Additionally, all tracking errors can converge with a settling time $T(z)$ as presented in \eqref{eq:T1}.

Finally, the closed-loop system is globally FxT stable with a settling time $T_{\text{max}}= T(s) + T(z)$ given by \eqref{eq:T}. This completes the proof.
 
\end{proof}
\begin{remark}
As stated previously, the proposed method has the potential to achieve robustness with a single parameter, especially in integral sliding mode control with fixed-time convergence, compared to fixed-time controllers proposed in \cite{khanzadeh2023fixed}. Moreover, the work in \cite{khanzadeh2023fixed} does not include perturbations in the control design; the authors considered only free perturbations. However, our approach is robust against matched perturbations.
\end{remark}
\begin{remark}
For setting the control parameters, only three parameters are required, in contrast to the approach developed in \cite{khanzadeh2023fixed}. Our method simplifies tuning to achieve fixed-time convergence.
\end{remark}

The estimated
models $\hat{f}_i$ are then given by \cite{hashimoto2022learning}
\begin{align}\label{eq:f_estimated}
   \hat{f}_i = \begin{bmatrix} \varrho_{1,i}(x_i) & \cdots & \varrho_{n,i}(x_i) \end{bmatrix}^T, 
\end{align}
where $\varrho_{1,i}(x_i), \ldots, \varrho_{n,i}(x_i)$ are the means of the GPs.
\section{Learning-Based Fixed-Time Control with Gaussian Process Regression}

To address the problem of uncertainty approximation, we employ Gaussian Process (GP) regression as a data-driven method to model the unknown function \( f_i(x) \) and its associated uncertainty. Using GP regression, we propose a learning-based fixed-time controller that guarantees fixed-time convergence.

An important assumption imposes a restriction on the complexity of the map \( f_i(x) \) through the reproducing kernel Hilbert space (RKHS) norm, as described below.

\begin{ass}\label{ass:3}
    For the map \( f_i: X \rightarrow \mathbb{R}^n \) in \( \mathcal{S} \), the RKHS norm with respect to the kernel \( \kappa \) is bounded, i.e., \( \|f_i\|_\kappa \leq \infty \) for all \( i \in \{1, \ldots, n\} \).
\end{ass}

This assumption is satisfied by most commonly used kernels, as all continuous functions defined over a compact state-space meet this condition \cite{seeger2004gaussian}. For further details on the RKHS norm, we refer the interested reader to \cite{case2019note}.

We assume access to measurements \( x \in X \) and \( y_i = f_i(x) + \mathcal{W} \), where \( \mathcal{W} \sim \mathcal{N}(0_n, \rho_F^2 \bm{I}_n) \) represents additive noise with \( \rho_F \in \mathbb{R}^+_0 \), as stated in \cite{umlauft2018uncertainty}.

\begin{ass}\label{ass:4}
    Measurements \( x \in X \) and \( y_i = f_i(x) + \mathcal{W} \) are available, where \( \mathcal{W} \sim \mathcal{N}(0_n, \rho_F^2 \bm{I}_n) \) is the noise.
\end{ass}

\subsection{Gaussian Process Regression}
Gaussian Process (GP) regression is a flexible tool for learning unknown functions from noisy observations. We model the function \( f(x) \) as a vector of components: \( F(x) = [F_1(x), \dots, F_m(x)]^\top \), where each \( F_i(x) \) is a zero-mean GP with a covariance function \( \kappa_i(x, x') \). The covariance function encapsulates the smoothness and relationships between data points. GP regression computes the posterior distribution of \( f(x) \) at any point \( x \), providing its mean as the most likely estimate and its variance as a measure of uncertainty. Given \( N \) measurements \( \{x^{(1)}, \dots, x^{(N)}\} \) and the corresponding noisy outputs \( \{y^{(1)}, \dots, y^{(N)}\} \), where \( y^{(j)} = f(x^{(j)}) + w^{(j)} \), the posterior distribution of \( F_i(x) \) is Gaussian, with the following mean and covariance:
\[
\varrho_i(x) = \overline{\kappa}_i^T (\kappa_i + \sigma_F^2 \bm{I}_N)^{-1} y_i
\]
\[
\sigma_i^2(x) = \kappa_i(x, x) - \overline{\kappa}_i^T (\kappa_i + \sigma_F^2 \bm{I}_N)^{-1} \overline{\kappa}_i
\]
Here, \( \overline{\kappa}_i \) is the vector of covariances between the test point \( x \) and the training points, while \( \kappa_i \) is the covariance matrix of the training points. The variance \( \sigma_i^2(x) \) represents the uncertainty in the prediction at the point \( x \).
The approximation error at any point \( x \in X \) is bounded as follows:
\[
|f_i(x) - \varrho_i(x)| \leq \tilde{\chi}_i \sigma_i(x)
\]
where \( \tilde{\chi}_i \) is a constant. This bound ensures that the approximation error remains small with high probability. For computational efficiency, we use multiple scalar GPs to model each component of \( f(x) \), avoiding the high computational cost of multivariate GP formulations \cite{capone2019backstepping}. The function \( f(x) \) is approximated by the vector of means:
\[
\varrho(x) = [\varrho_1(x), \dots, \varrho_m(x)]^T
\]
and the vector of variances:
\[
\sigma^2(x) = [\sigma_1^2(x), \dots, \sigma_m^2(x)]^T
\]

This approach provides a computationally efficient and robust solution for learning-based control, enabling effective handling of model uncertainties.

\subsection{Fixed-time control based on GPs}
A FxT-SM control design approach utilizing GP models is introduced. The control law is formulated directly using the development given in the previous section. 

 Initially, the model estimates $\hat{f}_i$ are calculated using pre-collected data before the control design phase. The relevant formulations are provided in \eqref{eq:f_estimated}. 
 Then, the control law \eqref{eq:u} is modified by:
 We propose the following control law:
\begin{small}
\begin{equation}\label{eq:u_GP}
    \begin{aligned}
        u_i(z) &= -g_i(x)^{-1} \left(\hat{f}_i(x)+ \alpha_{i1}\exp(z_{i}(t)^2)\left\lfloor z_i(t)\right\rceil^{\frac{p_i}{q_i}} \right.\\ 
    & \quad \left.- {\dot{x}_{id}(t)}+\alpha_{i2}\exp(s_i(z_i)^2)\left\lfloor s_i(z_i)\right\rceil^{0}\right)
    \end{aligned}
\end{equation}
\end{small}
 with $\alpha_{i2} >  \frac{2}{\sqrt{\pi}}\bar{d}_i>0$. 
 We need on the following Lemma to complete Theorem \ref{theorem:2}. 
\begin{lemma}
    \label{lemma:Lyapunov_ex}
    Consider a Lyapunov function \(V(z)=\frac{1}{2}z^2\) and its derivative satisfying \(\dot{V}(z) \leq
        \displaystyle |z|A\E^{z^2},\)
        where  $A<0$. Then, the variable \(z\) is fixed-time stable and settling time satisfying \(T(z)<T_{max}\) with,
        \(T_{max} =- \frac{2\sqrt{2}+1}{2A},\)
\end{lemma}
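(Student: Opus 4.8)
The plan is to reduce the hypothesis to a scalar comparison inequality for $z$ and then bound the settling time uniformly in $z_0$, exactly as in the two-regime argument used in Theorem~\ref{theorem:1}. First I would note that $V(z)=\tfrac12 z^2$ gives $\dot V = z\dot z$, so the assumed bound $\dot V \le |z|A\E^{z^2}$ becomes $z\dot z\le |z|A\E^{z^2}$. Since the right-hand side depends only on $|z|$ and is even, it suffices to treat $z>0$ by symmetry; dividing by $z>0$ yields $\dot z\le A\E^{z^2}<0$ (recall $A<0$), so $z$ decreases monotonically to the origin and establishing fixed-time stability amounts to bounding the convergence time independently of the initial value $z_0$.

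Next I would rewrite the Lyapunov estimate in terms of $V$ using $|z|=\sqrt{2V}$ and $z^2=2V$, giving $\dot V\le A\sqrt{2V}\,\E^{2V}$, and split the analysis into the two regimes $|z|\ge 1$ and $|z|<1$, mirroring Cases~1 and~2 of the proof of Theorem~\ref{theorem:1}. For $|z|\ge 1$ I would lower-bound $\E^{z^2}\ge z^2$, which turns the estimate into $\dot V\le -c_2 V^{h_2}$ with a superlinear exponent $h_2>1$; for $|z|<1$ I would use the trivial bound $\E^{z^2}\ge 1$, obtaining $\dot V\le -c_1 V^{h_1}$ with a sublinear exponent $0<h_1<1$. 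Each regime then falls directly under Lemma~\ref{lemma:1}: the superlinear term controls the time spent far from the origin and the sublinear term controls the time spent near it, and the two settling-time contributions $\tfrac{1}{c_2(h_2-1)}$ and $\tfrac{1}{c_1(1-h_1)}$ are each finite and independent of $z_0$. Adding them gives a uniform upper bound, which after substituting the explicit constants $h_1,h_2,c_1,c_2$ and simplifying the collected $\sqrt2$ factors is to be identified with $T_{\max}=-\tfrac{2\sqrt2+1}{2A}$ (positive because $A<0$).

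The step I expect to be the main obstacle is the constant bookkeeping in this last simplification: the factor $\sqrt2$ introduced by $|z|=\sqrt{2V}$ enters with different powers in the two regimes, and one must track it carefully so that the exponents genuinely satisfy $0<h_1<1<h_2$ and the constants combine to the claimed value of the numerator. As an independent check I would also carry out the computation by separating variables directly on $\dot z\le A\E^{z^2}$: integrating from $z_0$ down to $0$ gives $T\le -\tfrac1A\int_0^{z_0}\E^{-\zeta^2}\,d\zeta\le -\tfrac1A\int_0^{\infty}\E^{-\zeta^2}\,d\zeta=-\tfrac{\sqrt\pi}{2A}$. This confirms that the settling time is bounded uniformly in $z_0$, hence fixed-time, and shows that the stated constant $\tfrac{2\sqrt2+1}{2}$ is an admissible, if conservative, bound of the required form.
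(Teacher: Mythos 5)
Your primary plan is exactly the paper's own proof: split into $|z|\ge 1$ (use $\E^{z^2}\ge z^2$, superlinear exponent) and $|z|<1$ (use $\E^{z^2}\ge 1$, sublinear exponent), apply Lemma~\ref{lemma:1} in each regime, and add the two times. But the obstacle you flagged, the constant bookkeeping, is real and it does not close. Working the constants honestly: with $V=\tfrac12 z^2$, Case~1 gives $\dot V \le A|z|^3 = 2\sqrt2\,A\,V^{3/2}$, i.e.\ $c_2 = 2\sqrt2\,|A|$ and $h_2=\tfrac32$, so the Lemma~\ref{lemma:1} term is $\tfrac{1}{c_2(h_2-1)} = \tfrac{\sqrt2}{2|A|}$; Case~2 gives $c_1=\sqrt2\,|A|$ and $h_1=\tfrac12$, so the term is $\tfrac{2\sqrt2}{2|A|}$. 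The sum is $\tfrac{3\sqrt2}{2|A|}$, which is \emph{strictly larger} than the claimed $T_{\max} = \tfrac{2\sqrt2+1}{2|A|}$ (since $3\sqrt2\approx 4.24$ while $2\sqrt2+1\approx 3.83$). So this route proves fixed-time stability but cannot deliver the stated constant. (The paper's proof reports Case~1 as $T_1\le -\tfrac{1}{2A}$; that is the same computation with a factor $\sqrt2$ dropped, and with the correct value $-\tfrac{\sqrt2}{2A}$ the paper's own total would likewise be $-\tfrac{3\sqrt2}{2A}$.)

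What rescues your proposal is the argument you offered only as an ``independent check.'' Dividing the hypothesis $z\dot z\le |z|A\E^{z^2}$ by $|z|$ gives $\tfrac{\D}{\D t}|z| \le A\E^{z^2}$ for $z\neq 0$, and separation of variables yields $T \le -\tfrac{1}{A}\int_0^{|z_0|}\E^{-\zeta^2}\,\D\zeta \le -\tfrac{\sqrt\pi}{2A}$, uniformly in $z_0$. Since $\sqrt\pi < 2\sqrt2+1$, this establishes $T(z) < T_{\max}$ exactly as stated, indeed a strictly sharper bound, and it is rigorous as written. This is a genuinely different route from the paper: it is the same Gaussian-integral mechanism that underlies Lemma~\ref{lemma:2}, it bypasses Lemma~\ref{lemma:1} entirely, and it avoids the exponent and constant juggling that trips up the two-regime argument. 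You should promote that check to be the proof, and either drop the two-regime computation or present it only as a non-tight alternative giving the weaker bound $\tfrac{3\sqrt2}{2|A|}$.
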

\begin{proof}
    The proof contains two cases.

    \emph{Case 1}: when \(|z|>1\), we have $\E^{z^2}>z^2$, therefore,
     \[\dot{V}(z) \leq
        Az^3
        =2.2^{\frac{1}{2}}AV^{\frac{3}{2}},\]
       Applying Lemma 1, the settling time is,
 \[T_1(z)\leq -\frac{1}{2A},\]
\emph{Case 2}: when \(|z|\leq 1\), we have $\min(\E^{z^2})\geq 1$, therefore,
 \[\dot{V}(z) \leq
        A|z|
        =2^{\frac{1}{2}}AV^{\frac{1}{2}},\]
        Applying Lemma 1, the settling time in this case  is,
        \[T_2(z)\leq -\frac{\sqrt{2}}{A},\]
        Then, the ultimate settling time is $T(z) = T_1(z)+ T_2(z) \leq T_{\max}$, which completes the proof. 
\end{proof}
 \begin{theorem}\label{theorem:2}
  Under both Assumptions 1 and 2 and  if \( \alpha_{i2} >  \bar{d}_i + |\Delta f_i(x)| \), the global closed-loop system \eqref{eq:model}--\eqref{eq:u_GP} is FxT stable with a  bounded settling time satisfying  
   \begin{equation}\label{eq:T_GP}
     T_{\max}= T(s) + T(z),
   \end{equation}
where 
\begin{equation}\label{eq:T5}T(z)\leq \max\bigg(T_1(z_1), T_2(z_2),\dots, T_n(z_n)\bigg).\end{equation}
with 
\begin{equation}\label{eq:T6}
T_i(z_i) \leq \frac{2}{\alpha_{i1}}.\frac{1}{1-\frac{p_i^2}{q_i^2}}\end{equation}
and 
\begin{small}
\begin{equation}\label{eq:T7} T(s) \leq \max\bigg(T_1(s_1), T_2(s_2),\dots, T_n(s_n)\bigg),\end{equation}\end{small}
with
\begin{small}
\begin{equation}\label{eq:T8}
T_i(s_i) \leq \frac{2\sqrt{2}}{2(\alpha_{i2}-\bar{d}_i - |\Delta f_i(x)|}).
\end{equation}
\end{small}

\end{theorem}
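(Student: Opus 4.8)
The plan is to follow the same two-phase structure as the proof of Theorem \ref{theorem:1} — first establish fixed-time reaching of the sliding manifold $\mathcal{S}$, then fixed-time convergence of the tracking error along $\mathcal{S}$ — but now accounting for the residual drift mismatch $\Delta f_i(x) := f_i(x) - \hat f_i(x)$ that is introduced by replacing the true drift with its GP mean in the control law \eqref{eq:u_GP}.

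First I would differentiate the sliding variable \eqref{eq:s} and substitute the GP-based input \eqref{eq:u_GP}. Because $g_i(x)u_i$ now cancels $\hat f_i$ rather than $f_i$, the closed-loop sliding dynamics become
\[
\dot s_i(z_i) = -\alpha_{i2}\exp\big(s_i(z_i)^2\big)\lfloor s_i(z_i)\rceil^{0} + \Delta f_i(x) + d_i(t),
\]
so the effective matched disturbance is $\Delta f_i + d_i$, bounded by $|\Delta f_i(x)| + \bar d_i$. Note that, unlike \eqref{eq:u}, the control \eqref{eq:u_GP} omits the factor $\tfrac{\sqrt\pi}{2}$, so Lemma \ref{lemma:2} no longer applies directly and Lemma \ref{lemma:Lyapunov_ex} must be used instead.

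Second, with $V_i = \tfrac{1}{2}s_i^2$ I would compute $\dot V_i \le -\alpha_{i2}|s_i|\exp(s_i^2) + |s_i|\big(|\Delta f_i| + \bar d_i\big)$. Using $\exp(s_i^2)\ge 1$ to upper-bound the disturbance term by $|s_i|\big(|\Delta f_i| + \bar d_i\big)\exp(s_i^2)$ collapses the estimate into the template $\dot V_i \le |s_i|\,A\,\exp(s_i^2)$ with $A = -\big(\alpha_{i2} - |\Delta f_i(x)| - \bar d_i\big)$; the hypothesis $\alpha_{i2} > \bar d_i + |\Delta f_i(x)|$ guarantees $A<0$. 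Lemma \ref{lemma:Lyapunov_ex} then yields a per-channel reaching-time bound of the form \eqref{eq:T8}, and taking the worst channel gives $T(s)$ as in \eqref{eq:T7}.

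Finally, on $\mathcal{S}$ the relation $\dot s_i = 0$ forces $\dot z_i = -\alpha_{i1}\exp(z_i^2)\lfloor z_i\rceil^{p_i/q_i}$, which coincides with the reduced dynamics in the proof of Theorem \ref{theorem:1} and carries no dependence on $\Delta f_i$; hence the two-case Lyapunov argument (the regimes $|z_i|\ge 1$ and $|z_i|<1$) with $V_i = z_i^2$ and Lemma \ref{lemma:1} transfers verbatim, producing \eqref{eq:T6} after summing $T_i'$ and $T_i''$, and then \eqref{eq:T5} over all channels. Adding the two phases gives $T_{\max} = T(s) + T(z)$. The main obstacle is the reaching phase: correctly absorbing the bounded perturbation $\Delta f_i + d_i$ into the $|s_i|A\E^{s_i^2}$ form required by Lemma \ref{lemma:Lyapunov_ex} while keeping $A<0$, which is precisely where the modified gain condition and the new lemma (rather than Lemma \ref{lemma:2}) are needed.
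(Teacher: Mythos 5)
Your proposal is correct and follows essentially the same route as the paper's own proof: differentiate $s_i$ under \eqref{eq:u_GP} to expose the residual term $\Delta f_i + d_i$, bound $\dot V(s_i)$ using $\exp(s_i^2)\geq 1$ so it fits the template of Lemma \ref{lemma:Lyapunov_ex} with $A=-(\alpha_{i2}-|\Delta f_i(x)|-\bar d_i)<0$, then reuse the sliding-phase argument of Theorem \ref{theorem:1} unchanged and add the two settling times. Your observation that the missing $\tfrac{\sqrt{\pi}}{2}$ factor in \eqref{eq:u_GP} is exactly why Lemma \ref{lemma:Lyapunov_ex} replaces Lemma \ref{lemma:2} is a point the paper leaves implicit, but the argument is the same.
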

\begin{proof}
Consider the sliding manifold 
\[
\mathcal{S}=\left\{z_i(t)\in\mathbb{R}:s_i(z_i) = 0\right\}.
\]

  Substituting the input \eqref{eq:u_GP}, the dynamics of the sliding mode can be expressed as:
 \begin{small}
     \begin{equation}
        \dot{s}_i(z_i) =  \Delta f_i(x) -\alpha_{i2}\exp(s_i(z_i)^2)\left\lfloor s_i(z_i)\right\rceil^{0}+d_i(t).
\end{equation}
\end{small}
where
$\Delta f_i(x) = f_i(x) - \hat{f}_i(x)$.
    Consider the following Lyapunov candidate function:
\begin{align}\label{eq:LF_GP}
    V(s_i) = \frac{1}{2}s_i(z_i)^2,
\end{align}
Whose derivative is, 
\begin{align}\label{eq:D_LF_GP}
\begin{aligned}
    \dot{V}(s_i) &= s_i(z_i)\dot{s}_i(z_i)\\
    &= s_i(z_i) \left( - \alpha_{i2} \exp(s_i(z_i)^2) \left\lfloor s_i(z_i) \right\rceil^{0} \right.\\
    &\quad \left. + \Delta f_i(x) + d_i(t) \right)
\end{aligned}
\end{align}

Both perturbation \(d_i\) and \(\Delta f_i(x)\) are bounded according to Assumptions \ref{ass:3} and \ref{ass:4}. Then, we have the following dynamics.
\begin{align}\label{eq:D_LF_GP1}
\begin{aligned}
    \dot{V}(s_i) &\leq    - \alpha_{i2} \exp(s_i(z_i)^2)  |s_i(z_i)|\\
    &\quad   + |s_i(z_i)|(|\Delta f_i(x)| + \bar{d}_i) 
\end{aligned}
\end{align}
We have $\exp{(s_i(z_i)^2)} \geq 1$, indicating that the exponential of the square of $s_i(z_i)$ is at least 1. Therefore, the result stands as:

\begin{align}\label{eq:D_LF_GP2}
\begin{aligned}
    \dot{V}(s_i) &\leq    -\left(\alpha_{i2} - |\Delta f_i(x)| - \bar{d}_i\right) \exp(s_i(z_i)^2)  |s_i(z_i)|
\end{aligned}
\end{align}
Hence, $\dot{V}(s_i) < 0$ is ensured when $\alpha_{i2} > |\Delta f_i(x)| + \bar{d}_i$.
The magnitude of $|\Delta f_i(x)|$ is bounded. This is corroborated by Lemma 5, which guarantees that $|\Delta f_i(x)|$ holds with a probability of at least $1 - \vartheta$, thereby affirming the desired result.

Consequently, by applying \ref{lemma:Lyapunov_ex} with $A = -\left(\alpha_{i2} - |\Delta f_i(x)| - \bar{d}_i\right)$, the sliding variable \eqref{eq:s} is established to be FxT stable, exhibiting a bounded settling time $T_i(s_i)$ as delineated in \eqref{eq:T8}. Moreover, for system \eqref{eq:sysP}, the global settling time is constrained by $T(s)$ as articulated in \eqref{eq:T7}.

On the sliding manifold, denoted as $z_i \in \mathcal{S}$, the settling time remains consistent as per \eqref{eq:T2}. Consequently, the tracking error is demonstrated to be FxT stable, featuring a bounded settling time $T_i(z_i) \leq T_i^{'} + T_i^{''}$ as detailed in \eqref{eq:T6}. Additionally, all tracking errors can converge with a settling time $T(z)$ as explicated in \eqref{eq:T5}.

In conclusion, the closed-loop system is globally FxT stable with a settling time $T_{\text{max}} = T(s) + T(z)$ as stipulated in \eqref{eq:T_GP}. This concludes the proof.
\end{proof}
\begin{remark}
    It is important to note that in \eqref{eq:s}, the bound of the sliding mode variable can be minimized by choosing high control gains $\alpha_{i2} > 0$, although this may result in higher control inputs. Furthermore, with $\alpha_{i2} > 0$, the radius of \eqref{eq:s} in online learning scales with $\sigma(x)$. Consequently, a small $\sigma(x)$ ensures that the proposed online learning method consistently achieves a small sliding mode variable bound. In scenarios where $\Delta f_i$ approaches zero, the settling time matches that proposed in Theorem \ref{theorem:1}.

\end{remark}

\section{Simulations}
We consider a permanent magnet synchronous motor (PMSM)  \cite{khanzadeh2023fixed}:
\[
\begin{aligned}
\dot{x}_1 &= 2.5(x_2 - x_1) +u_1+d_1 (t) \\
\dot{x}_2 &= -x_2 - x_3x_1 + 25 x_1 +u_2+d_2 (t)\\
\dot{x}_3 &= -x_3 + x_1x_2+u_3+d_3 (t)
\end{aligned}
\]
withe the perturbations are: $
    d_1(t) = \sin(10 t), \quad 
    d_2( t) = \cos(10 t) 
    d_3(t) = \cos(10 t)  \sin(4 t).$
Consider the following control parameters.
$\alpha_{i1}  = 6, \alpha_{i2} = 4,  p_i = 8, q_i  = 10,  \bar{d}_i  = 1.$ with 
$T_i(x_i) = 0.92593, T_i(s_i) = 2.8284, T(x) = 0.92593, 
T(s) = 0.57364,   T_{\max} = 3.7544$.
The known and unknown components of the dynamics are the same as in \cite{khanzadeh2023fixed}. The GP hyperparameters were chosen by taking prior knowledge of the system into account.  
The control law is employed using two different model estimates, one trained using \( N = 5 \) data points per GP, and another using \( N = 50 \) data points per GP.  For the GPs, squared-exponential kernels are employed \( k(x, x^{'}) = \exp(-l |x - x^{'}|) \) with \( l = 1 \).
\begin{figure}[!ht]
\centering
\includegraphics[scale=0.44]{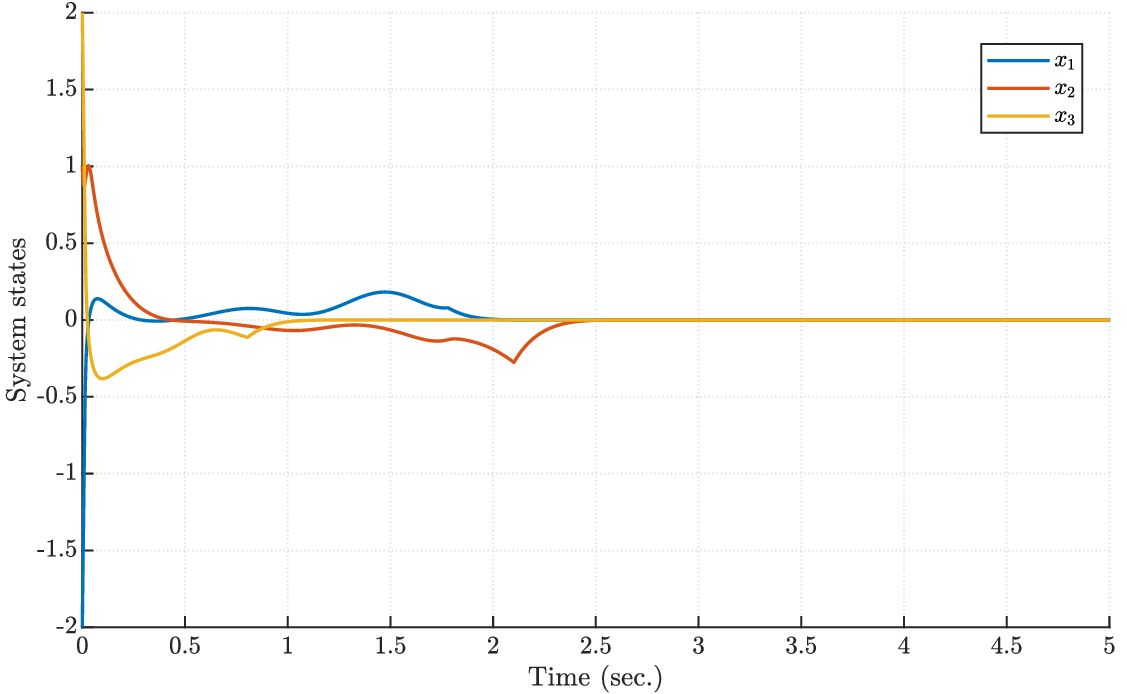}   
\caption{State variables   with $N= 5$ data.} 
\label{fig:States_GP_5}
\end{figure}
\begin{figure}[!ht]
\centering
\includegraphics[scale=0.44]{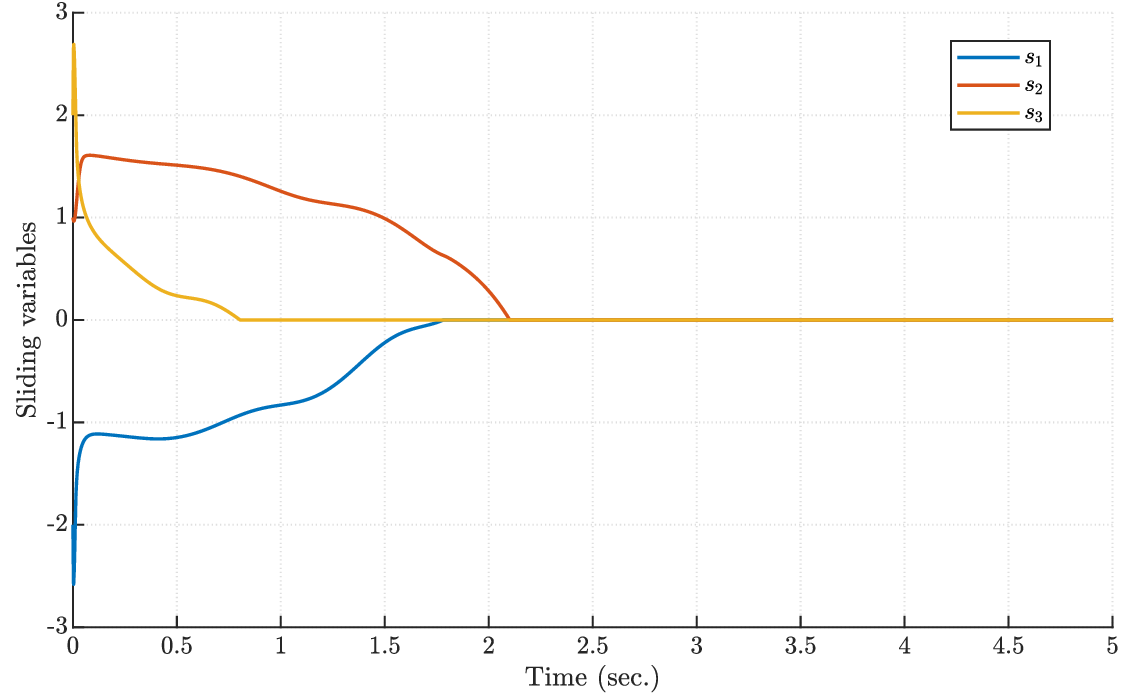}   
\caption{Sliding variables  with $N= 5$ data.} 
\label{fig:Sliding_GP_5}
\end{figure}
\begin{figure}[!ht]
\centering
\includegraphics[scale=0.44]{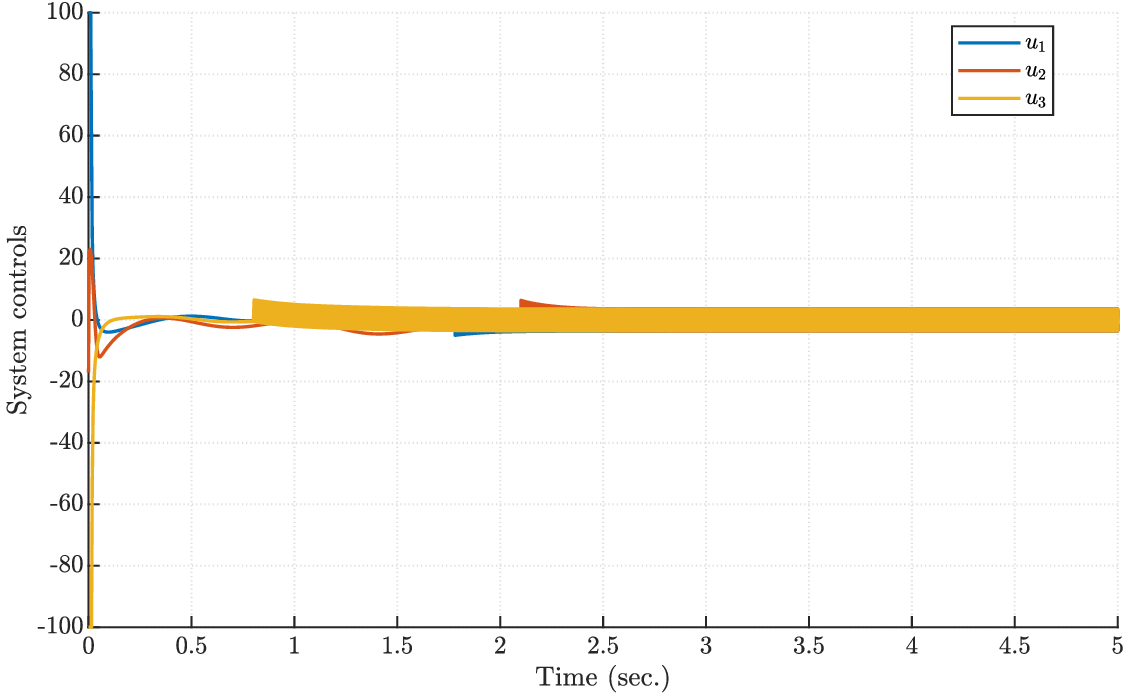}   
\caption{Input control  with $N= 5$ data.} 
\label{fig:Inputs_GP_5}
\end{figure}
\begin{figure}[!ht]
\centering
\includegraphics[scale=0.44]{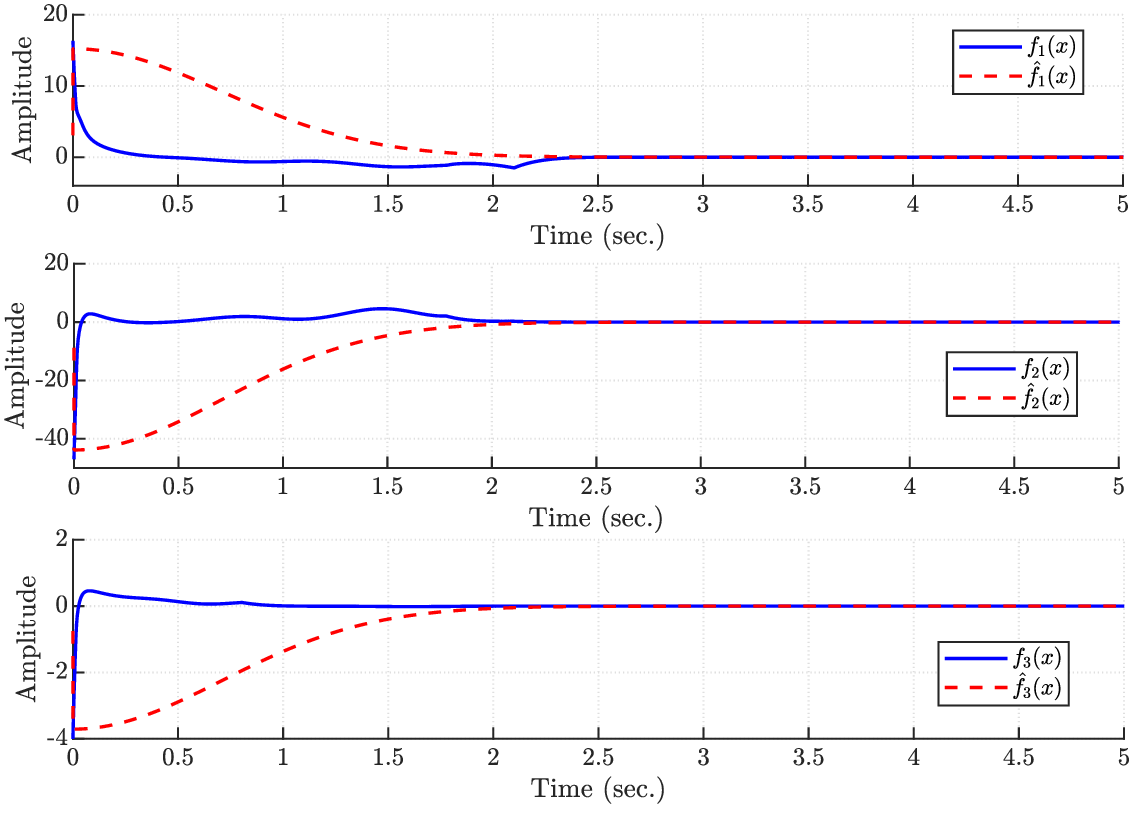}   
\caption{The unknown function  estimations  with  $N= 5$ data.} 
\label{fig:Estimated_GP_5}
\end{figure}
The plots depicting the  states, sliding surfaces, input controls, and function estimations using the online learning-based method from \cite{capone2019backstepping} and the FxT control learning are shown in Figs. \ref{fig:States_GP_5}--\ref{fig:Estimated_GP_5} for the first simulation (e.g., $N=5$) and in Figs. \ref{fig:States_GP_50}--\ref{fig:Estimated_GP_50} for the second simulation (e.g., $N=50$). The corresponding states in both cases are depicted in Figs. \ref{fig:States_GP_5} and \ref{fig:States_GP_50}. As can be seen, using more datasets leads to a drastic improvement in control performance compared to the smaller dataset while simultaneously requiring less data. This is because the data collected online using our approach always contributes to a significant reduction of the maximum model error, which in turn reduces the time derivative of the Lyapunov function and improves control performance. Figure \ref{fig:Sliding_GP_5} shows the sliding variable for the cases of $N=5$. The sliding surfaces illustrate how the control system is able to maintain the states within desired bounds more effectively as the dataset size increases. This improvement is consistent with the theoretical predictions, indicating that the online learning-based method enhances the robustness and stability of the control system. As indicated in Theorem \ref{theorem:2}, the settling time depends on the model error of the function $f_i$. When using less data, the settling time is greater than when using $N=50$. This is evident from the state trajectories, where the system with $N=50$ data points settles much faster to the desired states compared to the system with $N=5$ data points. The input control is depicted in Fig. \ref{fig:Inputs_GP_5}. These figures show the control efforts required to maintain system stability and achieve the desired performance. It can be observed that with a larger dataset ($N=50$), the control inputs are smoother and less aggressive compared to the smaller dataset ($N=5$), indicating a more efficient control strategy. The estimations for both scenarios are illustrated in Figs. \ref{fig:Estimated_GP_5} and \ref{fig:Estimated_GP_50}. These plots show the accuracy of the function estimations made by the online learning algorithm. With a larger dataset, the estimations are more accurate, which contributes to the overall improved performance of the control system.

\begin{figure}[!ht]
\centering
\includegraphics[scale=0.44]{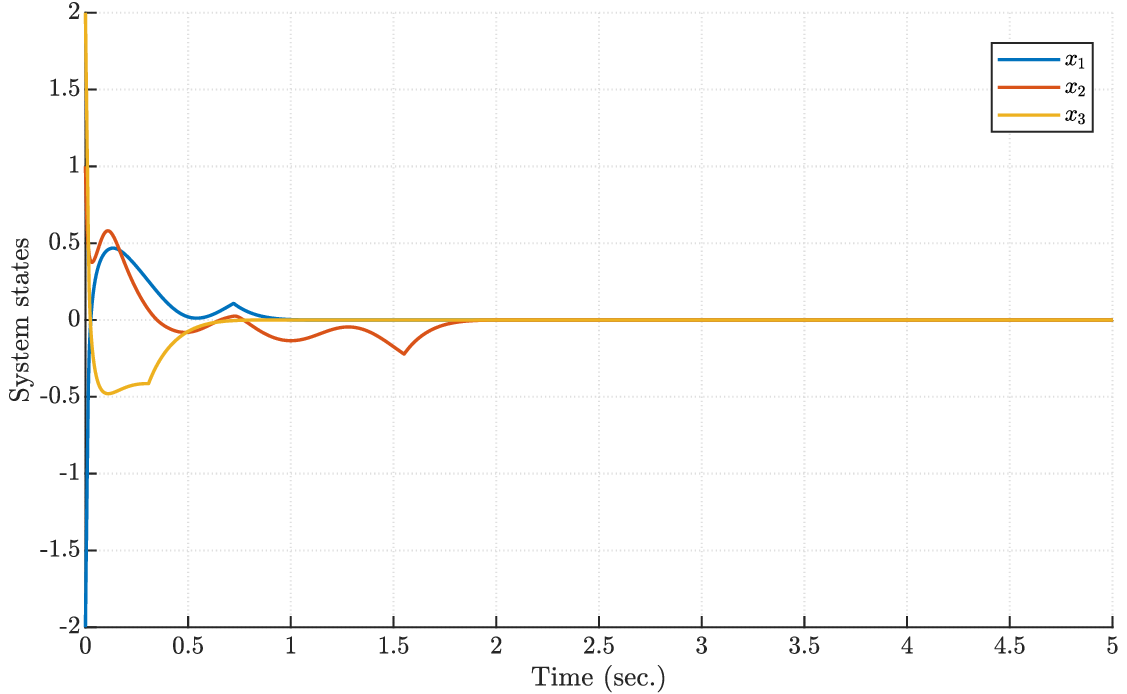}   
\caption{State variables  with $N= 50$ data.} 
\label{fig:States_GP_50}
\end{figure}
\begin{figure}[!ht]
\centering
\includegraphics[scale=0.44]{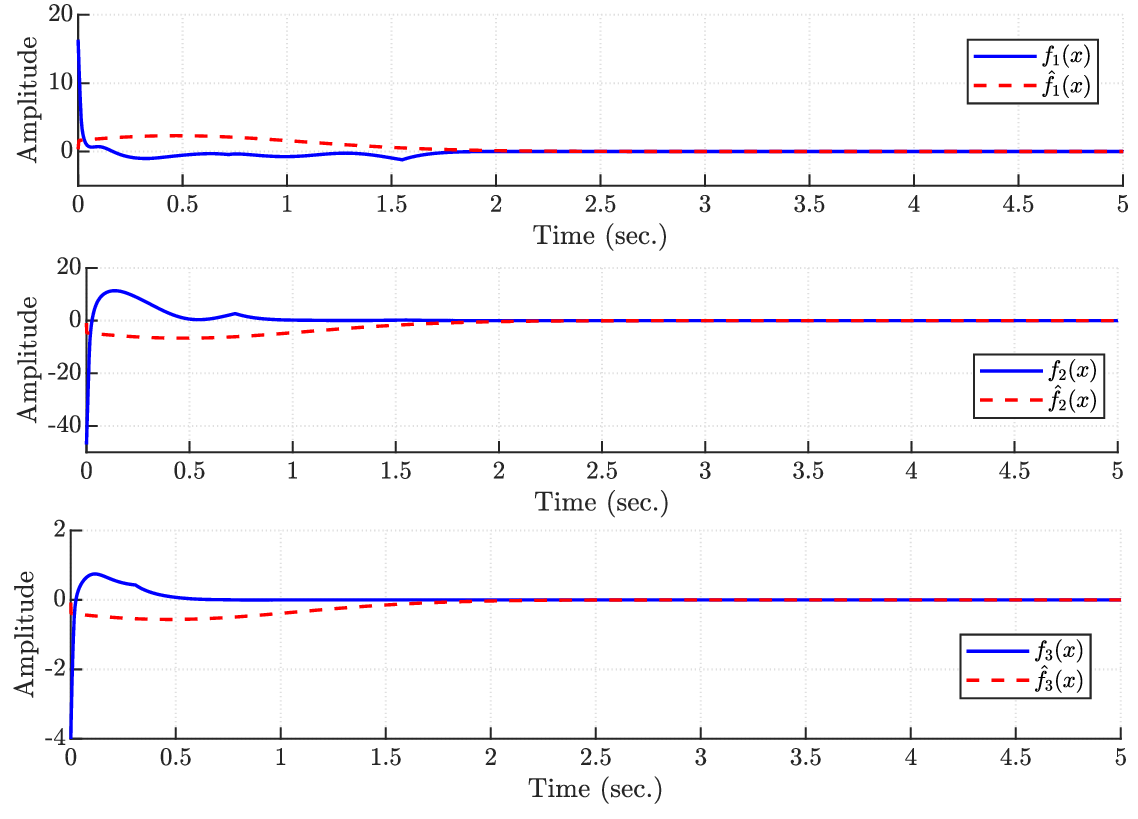}   
\caption{The unknown function estimation with  $N= 50$ data. } 
\label{fig:Estimated_GP_50}
\end{figure}

\section{Conclusions}
This paper addresses the problem of learning-based fixed-time control for a class of nonlinear systems subjected to matched perturbations. The first part focuses on designing an integral terminal sliding variable. Subsequently, a robust reaching phase is proposed, characterized by a single parameter necessary to achieve a Fixed-time FxT property. The estimation of the closed-loop system's settling time is provided. In the second part, GPs are employed to estimate certain parts of the system dynamics. 
 In the simulation section, we demonstrate that significant improvements in results are achievable with accurate data, comparable to those obtained when the model is known.
Future research directions could explore extending the applicability of this approach to other classes of continuous nonlinear systems.

 \balance
 \bibliographystyle{IEEEtran}
 \bibliography{bib}

\end{document}